\newcommand{\lyxdot}{.}
\begin{document}

\title{Quantum Walk with Jumps}

\author{H. Lavi\v cka\inst{1,2} \and V. Poto\v cek\inst{1} \and T. Kiss 
\inst{3} \and E. Lutz\inst{4} \and I.
Jex\inst{1}}

\institute{Czech Technical University in Prague, Faculty of Nuclear Sciences and Physical Engineering, Department of Physics, Břehová 7, CZ-115 19 Praha 1, Czech Republic, \email{hynek.lavicka@fjfi.cvut.cz} \and
Czech Technical University in Prague, Faculty of Nuclear Sciences and Physical Engineering, Doppler Institute for Mathematical Physics and Applied Mathematics, Břehová 7, CZ-115 19 Praha 1, Czech Republic \and
Research Institute for Solid State Physics and Optics, Hungarian Academy of Sciences, H-1525 Budapest, P.O.Box 49, Hungary \and
Department of Physics, University of Augsburg, D-86135 Augsburg, Germany}

\date{Received: date / Accepted: date}

\date{\today}

\PACS{03.67.-a, 05.40.Fb, 02.30.Mv}

\abstract{We analyze a special class of $1$-D quantum walks (QWs) realized
using optical multi-ports. We assume non-perfect multi-ports showing
errors in the connectivity, i.e. with a small probability the multi-ports
can connect not to their nearest neighbor but to another multi-port
at a fixed distance -- we  call this a jump.
 We study two cases of QW with jumps where multiple displacements can emerge 
at one timestep. The first case assumes time-correlated jumps (static 
disorder). In the second case, we choose the positions of jumps randomly in time 
(dynamic disorder).  
The probability distributions of position of the QW walker in both instances differ significantly: dynamic disorder leads to a Gaussian-like distribution, while for static disorder we find two distinct behaviors depending on the parity of jump size. In the case of even-sized jumps, the
distribution exhibits a three-peak profile around the position
of the initial excitation, whereas the probability distribution in
the odd case follows a Laplace-like discrete distribution modulated
by additional (exponential) peaks for long times. Finally, our numerical results indicate that by an appropriate mapping a universal functional behavior of the variance of the long-time probability distribution can be  revealed with respect to the scaled average of jump size.
}

\maketitle

\section{Introduction}

The quantum walk (QW) is a quantum mechanical model,
a generalization of a classical random walk.  It was introduced in 1993 
\cite{aharonov-1993-48,gudder-1988,grossing-1988} and later found fruitful 
applications as a tool to design efficient quantum algorithms.
The model of a quantum walk was defined in two distinct ways: continuous- and 
discrete-time. In the former, the particles (walkers) are achiral, the Hilbert 
space is spanned by the discrete position states corresponding to vertices of 
a graph. In the discrete-time case, the introduction of chirality is 
unavoidable. The Hilbert space corresponding to the chirality has its 
dimension equal to the number of possible directions of a step.

A possible experimental implementation of a classical random walk is the 
Galton board (also known as Quincunx). Here a large number of balls (walkers) 
fall through the board, changing their direction randomly on periodically 
arranged pins and forming so a binomial distribution of their final position.  
A quantum analogy of the Galton board (and one possible implementation of the 
QW) is shown in Fig.~\ref{Material} where the walker is a coherent light pulse 
moving through a medium with periodical boundaries that split the signal;
finally, there are detectors  at the end which represent the quantum equivalent 
of the bins in the classical model.

The spectrum of investigation of the QW (and its modifications) is broad.
The original idea was presented in \cite{aharonov-1993-48,gudder-1988,grossing-1988}
and since then a few review papers have  been published
\cite{kempe-2003-44,meyer96from,Vanegas-Andraca-2008}.
Recently, there have been studies of QW allowing L\'evy noise
in the model. The latter is  introduced via randomly performed measurements with  waiting 
times following a L\'evy distribution 
\cite{romanelli-2007-76,romanelli-2007-75}. 
The most relevant paper for the present work is \cite{yin-2007} in which the properties of 
the one-dimensional continuous-time QW  in 
a medium with static and dynamic disorder are examined. Other studies of the QW 
have focused on the meeting problem of two particles 
\cite{stefanak-2006-39,stefanak-2009-11}.
Recurrence properties of the walker
 have been investigated 
in \cite{stefanak-2009-11,Stefanak-2008,stefanak-2008-78}.
Moreover, localization of the walker has been studied in \cite{konno-2009-3}.
Finally, a theoretical investigation of the QW in random environment has been  performed in 
\cite{konno-2009-4,konno-2009-1}, where a robust mathematical definition of 
a random environment is  provided.

The simplest analytic task in the study of quantum walks is to determine the 
functional expression of the probability to find a particle at a certain 
location at a certain time. General analytic calculations of quantum walks are 
not known. However, asymptotic solutions of several quantum walk models
have been found using path integral techniques in 
\cite{konno-2009-3,konno-2009-4,konno-2009-1,konno-2002,konno-2008-1,konno-2008-2,konno-2009-2,konno-2005-1,konno-2005-2}
and in \cite{stefanak-2006-39,stefanak-2009-11} using Fourier transform.

Since their introduction, many experimental groups have tried to implement quantum walks. A number of successful 
realizations of one-dimensional QW have been reported  in optical 
lattices \cite{Dur-2002}, trapped ions \cite{Travaglione-2002}, and cavity QED 
\cite{Sanders-2003}. More recently, additional realizations of a quantum walk 
using atoms in an optical lattice \cite{Karski-2009}, trapped ions 
\cite{Schmitz-2009,Zahringer-2010} and photons  \cite{Schreiber-2010} have been announced.

Recent studies of quantum systems \cite{ElGhafar-1997,Kim-2000} with random 
potentials have shown that  localization of the particle can occur as 
a result of the randomness. Focusing on quantum walks, Ribeiro {\it et al.}
 \cite{Ribeiro_2004} have used two different coin operators switched according to 
the Fibonacci series and they have observed localization in the system.  Yin {\it  et  
al.}  \cite{yin-2007} have numerically simulated the continuous-time QW on a line and they have observed 
Anderson localization only in the case of static disorder, while dynamic 
disorder leads to decoherence and a Gaussian position distribution.
In addition, effects of spatial errors have been studied by Leung
{\it et al.} in \cite{Leung-2010}. They have examined  QWs in $1$ and $2$ dimensions
on networks with percolation where the missing edges or vertices absorb
the walker, leading to topological randomness of the graph.

The simplest discrete QW is described by the action of the coin and the step operator. Much attention was paid to the alternation of the coin operator -- position or time dependent coin and its implications on the walker dynamics have been extensively discussed. However, little focus was given to changes of the step operator. Our analysis takes a step in this direction. While assuming a constant coin we study changes in the step operator. When assuming a Galton board realization this amounts to a change in the connectivity between the layers of beam splitters forming the walks.

In the present paper, we focus on the QWs where signals can jump to a distant location, which is a generalization motivated by Lévy flights in classical mechanics. The jump may be caused by an inhomogeneity of the material, spatial proximity of non-neighbor channels or scrambling in the topology of the network (for instance relabeling of input-output label of multiports forming the network realising the QW). 

We study two basic modifications of the QW. In the first case, which 
we call dynamic disorder, the jumps are prepared as independent and identically 
distributed in time, whereas in the second case, called static disorder, the 
positions of  the jumps are perfectly correlated. We investigate the problem using 
computer simulations employing the  Zarja  library \cite{Lavicka-2010} 
\footnote{\url{http://sourceforge.net/projects/zarja/}}
and its offspring library focused on the QW 
\footnote{\url{http://sourceforge.net/projects/quantumwalk/}}.  

The structure of the article is as follows. In Section~2, 
we motivate our study from an experimental point of view. Then we define a model with 
next neighbor interactions only. In Section~3, we show the results of the 
simulations based on Monte Carlo method. In Section~4, we discuss our results 
and draw conclusions. Finally,  we describe in the appendix the 
set of operators used in the simulations and an algorithm to sample from the 
appropriate probability distribution.

\section{Definition of a Quantum Walk}

We model a sequence of optical layers by an array of multi-ports, or beam
splitters (Fig.~\ref{Beam splitter}), forming so a large interferometer, see 
Fig.~\ref{Material} -- this is a quantum analog to Galton's
board (Quincunx). Due to the regular structure of the interferometer,
it is natural to treat the temporal evolution of the system in discrete
steps, discretized by the time needed for an excitation to travel the
distance between two consecutive layers at a constant angle. If we
let this angle approach the right angle, we obtain so-called static
disorder (see Fig.~\ref{Jump}d,e), where the same set of beam splitters
is used repeatedly in every time step.
An excitation enters the system of multi-ports at one selected
position and spreads due to the coherent interaction with the beam
splitters. A number of detectors is placed in the interferometer so
that the signal hits one detector in every possible path after a given
number of interactions with the medium. The resulting scheme is an implementation of the QW.

\begin{figure}
\begin{centering}
\includegraphics[scale=0.6]{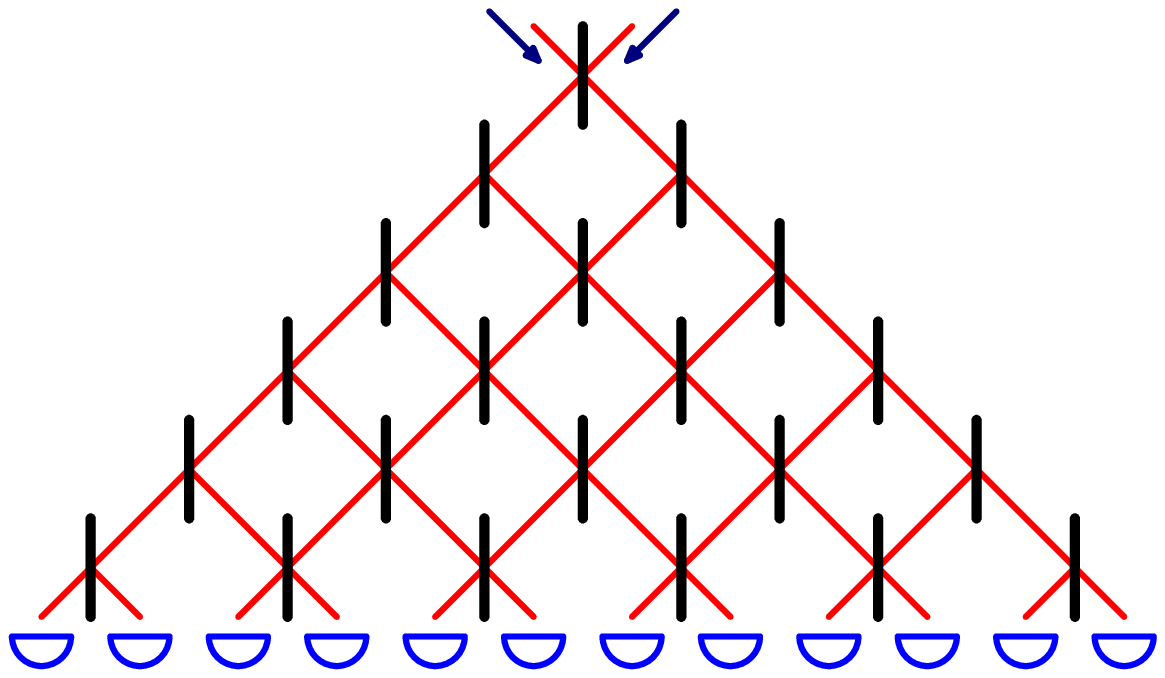}
\par\end{centering}

\caption{A schematic of an interferometer simulating material and implementing
a random walk. The red lines are the connections of multi-ports along
which the signal is transmitted. The light blue half-circles represent 
detectors.
The blue arrows are two selected input channels. Every horizontal
layer of the multi-ports is accessed simultaneously.
}

\label{Material} 
\end{figure}

\begin{figure}
\begin{centering}
\includegraphics[scale=0.15]{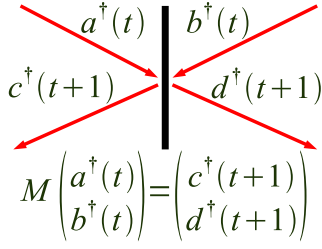} 
\par\end{centering}

\caption{Action of a beam splitter on the input states $a^\dagger (t)$
and $b^\dagger (t)$ which are transformed to the output states
$c^\dagger (t+1)$ and $d^\dagger (t+1)$ using the matrix $M$.}

\label{Beam splitter} 
\end{figure}

 In the following, we introduce two ways to describe the system. First we introduce the standard notation for a discrete-time QW.
We define the basis
states of the system as states localized at the beam splitters just
before or after the interaction takes place. Therefore, every basis state is described by a position within the beam
splitter layer (an integer number) and a chirality, which can take
values $L$ or $R$ for left and right, respectively.
Formally, we define the state space $\mathcal{H}$ of the system as
\begin{equation}
\mathcal{H}=\mathcal{H}_{S}\otimes\mathcal{H}_{C}\end{equation}
where a {}``position'' space $\mathcal{H}_{S}$ and a {}``coin''
space $\mathcal{H}_{C}$ (the name stemming from the idea of a random
walker tossing a coin to decide the direction of his next step) are
defined as
\begin{equation}
\mathcal{H}_{S}=\text{Span}\lbrace|n\rangle\mid n\in\mathbb{Z}\rbrace
\end{equation}
and
\begin{equation}
\mathcal{H}_{C}=\text{Span}\lbrace|L\rangle,|R\rangle\rbrace.
\end{equation}
We define all the numbered states $|n\rangle$, as well as the states
$|L\rangle$ and $|R\rangle$, to be normalized and mutually orthogonal.
Therefore, $\mathcal{H}_{S}$ is isomorphic to the space $\ell^{2}$ of 
quadratically integrable complex sequences and $\mathcal{H}_C$ to 
$\mathbb{C}^{2}$. The basis states are then constructed as tensor product 
basis states $|n,R\rangle$ and $|n,L\rangle$, where $n\in\mathbb{Z}$.

Following Konno \cite{konno-2009-1}, we define
that the state of the system at time $t$ in a random environment
$\omega$ 
 is described by the positive semi-definite density 
matrix $\widehat{\rho}_{\omega}(t)$ on $\mathcal{H}$
having $\mathbf{Tr}\ \widehat{\rho}_{\omega}(t)=1$. Thus
the density matrix of the system $\widehat{\rho}(t)$ at time $t$
is \begin{equation}
\widehat{\rho}(t)=\int_{\omega\in\Omega}\widehat{\rho}_{\omega}(t) \cdot 
\mathbb{P}(d\omega),\label{Definition of density matrix of 
system}\end{equation}
where symbols $\Omega$ and $\mathbb{P}$ will be specified later in section 2.2.

The evolution of the system by one timestep from $t$ to $t+1$ for
one particular setup of environment $\omega$ is described by a unitary
operator $U_{\omega}(t\rightarrow t+1)$ which acts as follows: \begin{equation}
\widehat{\rho}_{\omega}(t+1)=U_{\omega}(t\rightarrow t+1)\cdot\widehat{\rho}_{\omega}(t)\cdot U_{\omega}^{\dag}(t\rightarrow t+1)\label{Evolution equation of QW}\end{equation}
just the same way as typical propagator on density matrix in quantum
mechanics in, e.g., \cite{Blank_1993}.

\subsection{Quantum Walk}

The QW described, e.g., in \cite{kempe-2003-44} assumes that the conditions
of the environment are stable and thus it holds 
$\widehat{\rho}(t)=\widehat{\rho}_{\omega}(t)$
for all $\omega\in\Omega$. If we set up the initial density matrix
$\widehat{\rho}(0)$ as a pure state $\vert\psi_{0}\rangle\in{\cal H}$
using $\widehat{\rho}(0)=\vert\psi_{0}\rangle\langle\psi_{0}\vert$,
then the system is in a pure state $\vert\psi_{T}\rangle=\prod_{t=0}^{T-1}$ $U^{QW}(t\rightarrow t+1)\vert\psi_{0}\rangle$
(the product expanded in the proper time ordering) at every timestep
$T$ and the appropriate density matrix is $\widehat{\rho}(T)=\vert\psi_{T}\rangle\langle\psi_{T}\vert$.
Moreover, the  original QW was defined with fixed unitary operator
$U^{QW}(t\rightarrow t+1)$ in the form of a composition of two unitary
operations as follows: \begin{equation}
U^{QW}(t\rightarrow t+1)=S\cdot C.\label{QW unitary operator}\end{equation}
 Thus, the evolution of the walker at time $T$ is $\vert\psi_{T}\rangle =$
$\left(U^{QW}\right)^{T}\vert\psi_{0}\rangle$.
The operator $C$ is called a \textsl{coin operator} and describes
the transformation induced by a beam splitter. Here the position of
a localized state stays unchanged and the operation acts only on the
coin state as $C=I\otimes M$, where $M$ is a unitary operation transforming
the probability amplitudes due to a partial reflection on the beam splitter.
We assume for simplicity that all the beam splitters have the same
physical properties and perform a Hadamard transform on the input
states, \begin{equation}
M=\frac{1}{\sqrt{2}}\left(\begin{matrix}1 & \phantom{-}1\\
1 & -1\end{matrix}\right).\end{equation}
The operator $S$ represents the propagation of an excitation in the
free space between the beam splitters. Hence, the chirality of a basis
state does not change but the position is shifted by $\pm1$, depending on
the coin state. We can express $S$ as \begin{equation}
S=\sum_{n\in\mathbb{Z}}(\vert n+1,R\rangle\langle n,L\vert+\vert 
n-1,L\rangle\langle n,R\vert).\end{equation}
 It is easy to show that the sum converges and defines a unitary operator 
defined on all the state space $\mathcal{H}$.

If the initial state of the walker is one of the basis states, for
example $|0,R\rangle$, it evolves under the operation $U$ so that
in terms of a complete measurement in the position space, the probability
spreads to both sides from the starting position. However, it is bounded
between the positions $-t$ and $t$ as it can't change by more than
$1$ in either direction in any step. Moreover, due to the fact that
a transition by $1$ has to be done in every time step, the walk is
restricted at each time $t$ to a subspace spanned by the basis states
for which the position shares the same parity with $t$.

\global\long\def\half{\ensuremath{{\textstyle \frac{1}{2}}}}

Alternatively, we can describe the path of the walk in terms of the edges of 
the underlying graph instead of its vertices, following the physical 
trajectories of the excitation and eliminating the need to describe the 
propagation between beam-splitters.  In the following, we will call these 
edges {}``channels''. Due to the fact that every channel lies in between two consecutive 
positions of the beam splitters, as well as to avoid confusion in the 
notation, we will denote the channels by
half-integer numbers.
A notation based on the channel formalism can be introduced and mapped
to the previously defined state space in more ways. One such possibility
is to identify the channel as a given time $t$ with the state in
which the excitation is at the \textsl{end} of the propagation, just
before hitting another beam splitter (or a detector). This time-dependent
mapping is given by the formulas \begin{equation}
|n+\half;t\rangle=\begin{cases}
|n,L\rangle & \text{if \ensuremath{t} is even},\\
|n+1,R\rangle & \text{if \ensuremath{t} is 
odd}\end{cases}\label{Mapping_of_states_1}\end{equation}
 for even $n$ and \begin{equation}
|n+\half;t\rangle=\begin{cases}
|n+1,R\rangle & \text{if \ensuremath{t} is even},\\
|n,L\rangle & \text{if \ensuremath{t} is 
odd}\end{cases}\label{Mapping_of_states_2}\end{equation}
 for odd $n$. One can verify that such a mapping respects the parity
rule and covers all the subspace of $\mathcal{H}$ that is actually
used by the quantum walk if we assume that the initial state was $|0,R\rangle$
or $|0,L\rangle$. In particular, these two initial states are denoted
$|-\half;0\rangle$ and $|\half;0\rangle$ in the channel
notation, respectively; cf. Fig.~\ref{Material} for a visualization.
Since the instantaneous chirality of an excitation is uniquely given
by the position of the channel and the time, we do not need to specify
the coin degree of freedom explicitly in this approach. The {}``coin
toss'' and {}``step'' are merged into one unitary operation which
mixes neighboring channels in pairs.

The above discussion gives two equivalent ways to describe QW on a line. 
Throughout this work, we will use the latter approach as it makes it much simpler to describe the jumps in the network.

\subsection{Quantum Walk with jumps}

We will assume that the quantum walk is disturbed by random topological 
errors, see Fig.~\ref{Jump}, which are modeled by random changes of 
connectivities between the multi-ports
forming the network in Fig.~\ref{Material}. We focus our analysis
on two distinct situations. First we will assume that we deal with
the repetition of random but stationary errors. 
Stationary errors mean that in each layer the same jumps appear. One could also realize static disorder with a single set of beam splitters, if the incident beam is parallel to the layer of beam splitters and repeatedly sent through them (see Fig.~\ref{Jump}d,e). The other situation refers to the case when in each layer jumps are generated independently. We refer to the latter situation as dynamic disorder (see Fig.~\ref{Jump}a,b,c).

In the QW with jumps we assume that the unitary operators 
$U_{\omega}(t\rightarrow t+1) = U_{\omega}^{jump} (t\rightarrow t+1)$ in Eq. \ref{Evolution equation of QW}
are random (the set of such operators forming a probability space of
random unitary operators) and that $U_{\omega}^{jump}(t\rightarrow t+1)$
can be written in form of a joint action of two unitary operations,
\begin{equation}
U_{\omega}^{jump}(t\rightarrow t+1)=S_{\omega}^{jump}(t)\cdot U_{\omega}^{QW}(t\rightarrow t+1), \label{QW with jumps unitary operator}
\end{equation}
where $U_{\omega}^{QW}(t\rightarrow t+1)$ is the evolution operator of the $1$D QW in a clean media without errors, defined by Eq. \ref{QW unitary operator}.

We define that the set of all unitary operators $S_{\omega}^{jump}(t)$
forms the probability space $S(\Omega,{\cal F},\mathbb{P})$. For the sake of simplicity
of the following discussion and the numerical simulations, we make the set
$\Omega$ finite by replacing the infinite walking space by a cycle of size $N$
with a periodic boundary condition. If $N$ is chosen sufficiently large, this 
imposes no restriction on the validity of the results. The sample space
$\Omega$ is then the set of all the possible combinations of jump operators
$\mathbf{{P}}_{j_{i}}$ exchanging signals in channels $j_{i}$ and
$j_{i}+j$, having the matrix form \begin{equation}
\mathbf{{P}}_{j_{i}}=\left(\begin{smallmatrix}1 & 0 & \ldots & 0 & \ldots & 0 & \dots & 0 & 0\\
0 & 1 & \ldots & 0 & \ldots & 0 & \dots & 0 & 0\\
\vdots &  & \ddots &  &  &  &  &  & \vdots\\
0 & 0 & \ldots & 0 & \ldots & 1 & \dots & 0 & 0\\
\vdots &  &  &  & \ddots &  &  &  & \vdots\\
0 & 0 & \ldots & 1 & \ldots & 0 & \dots & 0 & 0\\
\vdots &  &  &  &  &  & \ddots &  & \vdots\\
0 & 0 & \ldots & 0 & \ldots & 0 & \dots & 1 & 0\\
0 & 0 & \ldots & 0 & \ldots & 0 & \dots & 0 & 1\end{smallmatrix}\right),\label{Structure_of_propagation_operator}
\end{equation}
i.e., that of a transposition operator. ${\cal F}=2^{\Omega}$
is the $\sigma$-field defined on
\begin{equation}
\Omega=\left\lbrace 
\mathbf{E},\mathbf{P}_{j_{1}},\mathbf{P}_{j_{1}}\mathbf{P}_{j_{2}},\ldots
\right\rbrace
\end{equation}
and $\mathbb{P}:{\cal F}\rightarrow\left[0,1\right]$
is the probability measure, specified by the probability of elementary events, 
defined as
\begin{equation}
\mathbb{P}\left(\pi\right)=\frac{1}{Z}p^{tr(\pi)}\left(1-p\right)^{N-2\cdot 
tr(\pi)}.
\end{equation}
We define $tr(\pi)$ as the number of transpositions of indexes forming
permutation $\pi$, $N$ is equal to the size of the system and consequently to the dimension of jump operators $\mathbf{{P}}_{j_{i}}$,
$p$ is probability that one pair of errors with distance $j$ occurs
and finally $Z=\left( 1+\left( - p \right)^{\frac{N}{g}}\right)^g$ is 
normalization where $g= \mathrm{gcd} (N,j)$ ($\mathrm{gcd}$ stands for greatest common denominator). 

\begin{figure}
\begin{centering}
\includegraphics[scale=0.7]{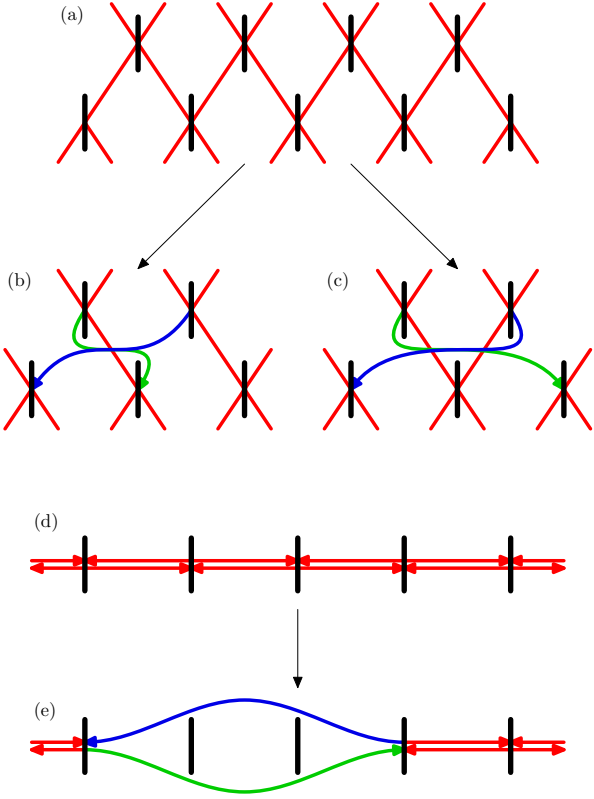} 
\par\end{centering}

\caption{Schematic picture of different jumps of the signal. 
Scheme (a) shows a part of interferometer without jumps.
Two schemes in the middle, namely (b) and (c), show dynamic disorder pattern where (b) and (c) shows even ($j=2$) and odd ($j=3$) size of jump, respectively, which can emerge independently at every level. Scheme (d) shows an interferometer without jumps for static disorder and a jump of size $j=2$ is illustrated in scheme (e).}

\label{Jump} 
\end{figure}

\section{Simulation of the model}

The disorder of the system is expressed by the set of unitary operators
displacing the signal to a distant location. In general, the system
can undergo either static or dynamic disorder. The system with static
disorder is propagated by  random unitary operators fixed during the realization,
i.e., the operators are correlated in time. On the other hand, the evolution
of the system undergoing dynamic disorder is governed by the operators
that are independently and identically distributed in time.

\subsection{Static disorder}

The random unitary operators propagating the system with typical structure of 
jumps according to Eq.  \ref{Structure_of_propagation_operator} can have due 
to mapping \ref{Mapping_of_states_1} and \ref{Mapping_of_states_2} one of two 
fundamental distinguished forms, according to the parity of the length of the 
jumps induces by the errors in the media.
Jumps of even lengths $j$ do not change chiral state of the walker in Hilbert 
state ${\cal H}$ in contrast to jumps of odd lengths $j$ which swap chirality 
of the walker from state $L$ to $R$ and vice versa.

\subsubsection{Odd jumps}

The only tunable parameter that affects the evolution of the walker is
the probability $p$ that one error (jump with distance $j$) occurs. The value of the 
parameter $p$ close to $0$ should produce typical chiral $1$D QW pattern,
which is clearly visible on the top of 
Fig.~\ref{Quantum_random_walk_odd_jumps}. Here the QW was initiated in 
a localized state $|0+\half;0\rangle$ of the walker,
causing an asymmetrically distributed walk. Higher values of $p$ produce
totally different pattern where high-frequency oscillations of probability
distribution of positions of the walker are suppressed. In the middle and at 
the bottom of Fig.~\ref{Quantum_random_walk_odd_jumps} the probability
distribution of the positions of the walker shows Laplace distribution
modulated by Laplace distributed peaks with distance $j$ between
neighboring peaks (clearly observable as the small triangular peaks modulated
on a bigger structure in Fig.~\ref{Quantum_random_walk_odd_jumps}).

\begin{figure}

\begin{centering}
{\includegraphics[scale=0.67]{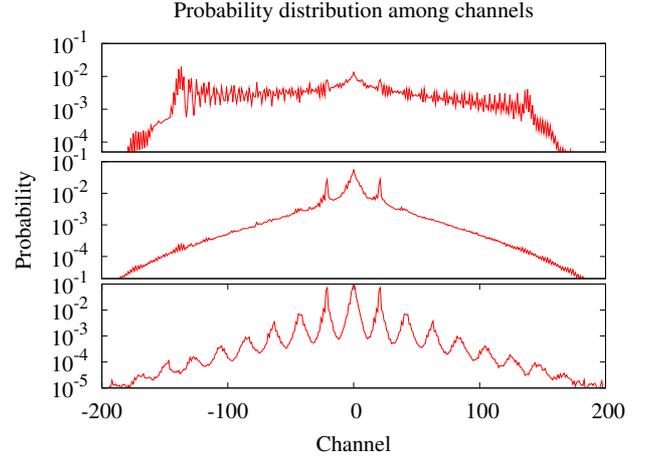}}

\par\end{centering}

\caption{Typical patterns of probability distribution of position of the walker
among channels after $T=200$ steps for $j=21$, from the top to the bottom $p=0.01$, $p=0.05$ and $p=0.2$.
On the top, the typical structure of the QW without jumps that is perturbed by formation
of an extra central peak. In the middle, we see formation of Laplace-like tail and $3$ peaks with distance $j$. Central peak is located at the position of the initial excitation of the system at $T=0$. On
the bottom, the whole distribution follows an overall exponential
decay which is modulated by exponential peaks with a distance $j=21$
between maxima. The graphs were obtained using Monte Carlo method after $R 
= 20000$ runs of the randomized evolution.}

\label{Quantum_random_walk_odd_jumps} 
\end{figure}

Previous observations of the fundamental change induced by  the variation of $p$ is
supported by Fig.~\ref{Quantum_random_walk_odd_jumps_distribution}
showing evolution of the probability distribution. The top part shows the 
typical structure of evolution of the probability distribution of the walker---the quantum carpet \cite{Berry-2001}---with additional quantum carpets
on the border of the main one which were formed at initial stages of 
evolution.
For small values of $p$, the interference is not strong enough to change the pattern
of distribution. In contrast, in the second case displayed at the bottom of 
Fig.~\ref{Quantum_random_walk_odd_jumps_distribution}, where $p = 0.5$, we 
observe a typical structure of equidistant peaks separated by valleys of width 
$j$.
The peaks are formed early during evolution and they do not change their 
positions later.

\begin{figure}
\begin{centering}
{\includegraphics[scale=0.76]{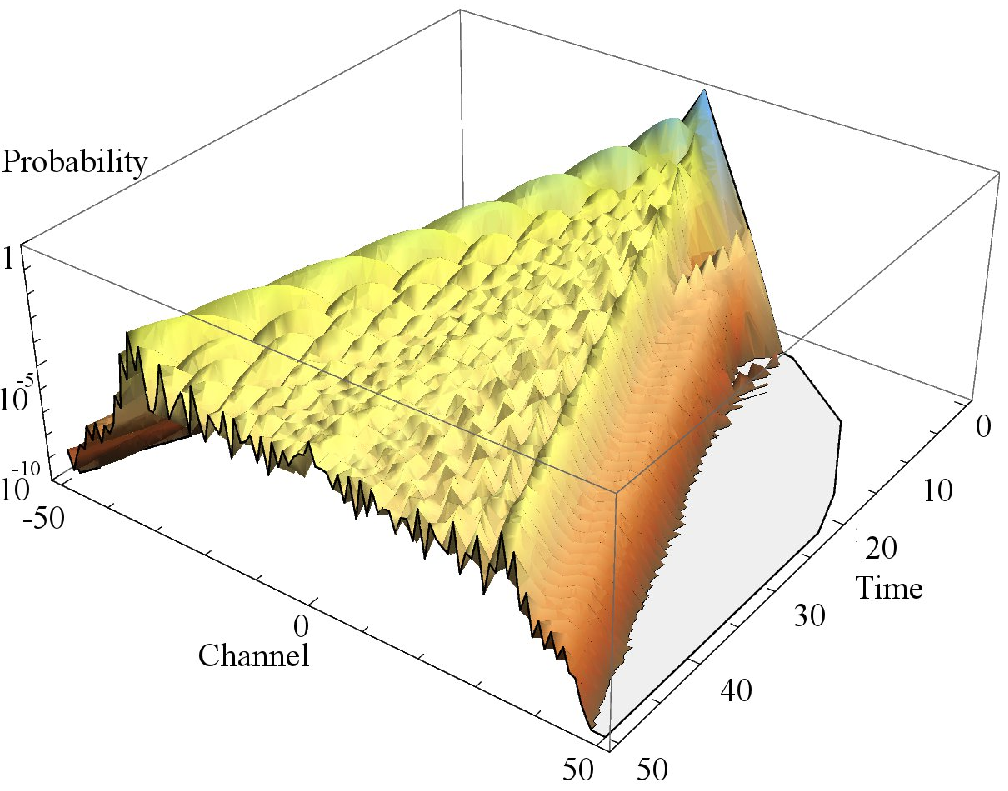}}
{\includegraphics[scale=0.76]{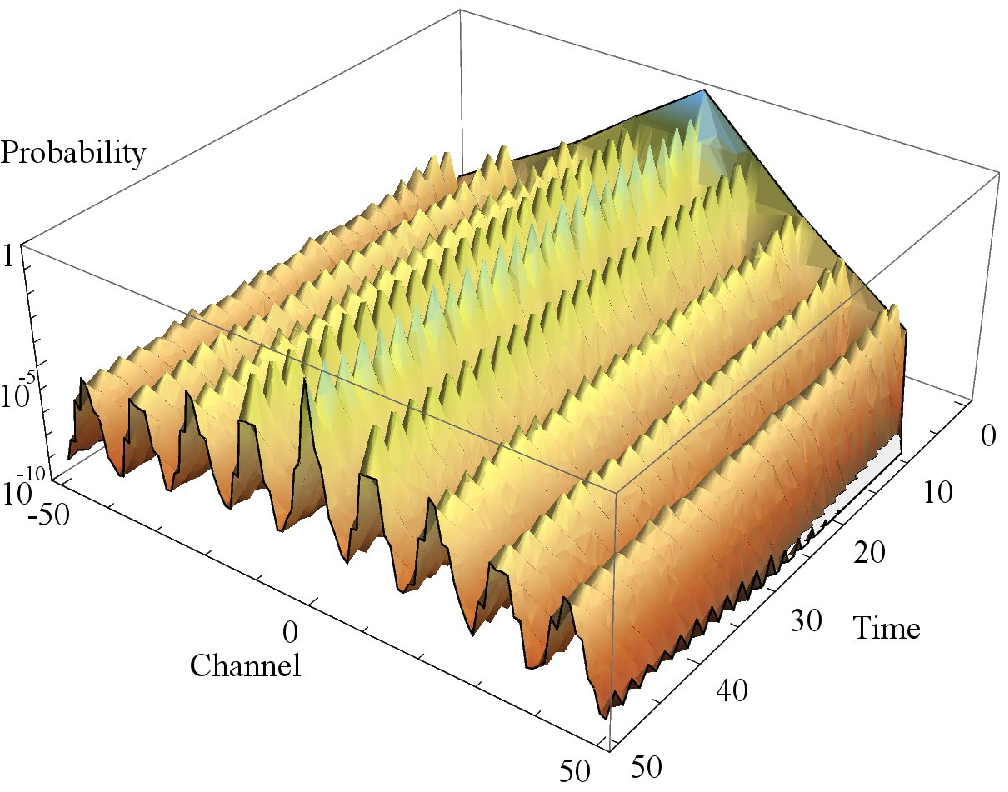}} 
\par\end{centering}

\caption{Probability distribution among channels for $T=200$, $R=20000$ and
$j=11$; from the top to the bottom $p=0.01$ and $p=0.5$. On the
top, the maxima of probability move outward similarly to the standard
case of the QW. On the bottom, the maxima are formed and they stay
at the point of original formation. }

\label{Quantum_random_walk_odd_jumps_distribution} 
\end{figure}

To support the idea that the walker froze in the close neighborhood
of a few preferred locations we analyzed the evolution of entropy and variance
of the probability distribution of the position of the walker. The evolution of 
the variance, shown in Fig.~\ref{Evolution_of_variance}, visualizes the 
observation from the previous paragraph. For a probability $p$ close to $0$ 
(unperturbed QW), we see a clear ballistic diffusion of the walker ($\sigma^2 
\sim t^{2}$)---in contrast to this we stress that for a classical random 
walker $\sigma^2 \sim t$.  Increasing $p$, we observe sub-ballistic diffusion 
and finally, in the range close to $1/2$, the variance tends to finite 
constant.

\begin{figure}
\begin{centering}
{\includegraphics[scale=0.6]{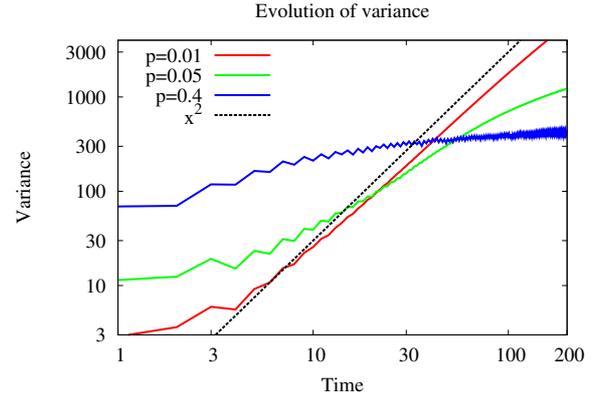}} 
\par\end{centering}

\caption{Evolution of variance of the system f or $T=200$, $R=20000$ and 
$j=11$. Transition from the case of ballistic diffusion of the signal
for probabilities $p$ close to $0$ to the case of sub-ballistic diffusion,
demonstrated for $p=0.05$ can be observed, and finally we observe for $p=0.4$ that variance 
tends to a constant, i.e., diffusion coefficient approaches $0$ and the walker 
ceases to {}``spread''---a characteristic property of localization 
\cite{anderson-1958}.
}

\label{Evolution_of_variance} 
\end{figure}

Next, we turn our attention to the evolution of the classical (Shannon) entropy
of the probability distribution of the positions of the walker, which is shown
in Fig.~\ref{Evolution_of_entropy}. The evolution of the classical entropy
follows an analogous behaviour as described in the case of variance in the 
previous paragraph.  Thus, we can observe for the probability $p$ close to 
unperturbed QW an increase of classical entropy proportional to 
$\ln\left(t\right)$. However, rising probability $p$ causes slower and slower 
increase in entropy in time and for the case $p=0.4$ we observe a saturation 
of entropy.

\begin{figure}

\begin{centering}
{\includegraphics[scale=0.6]{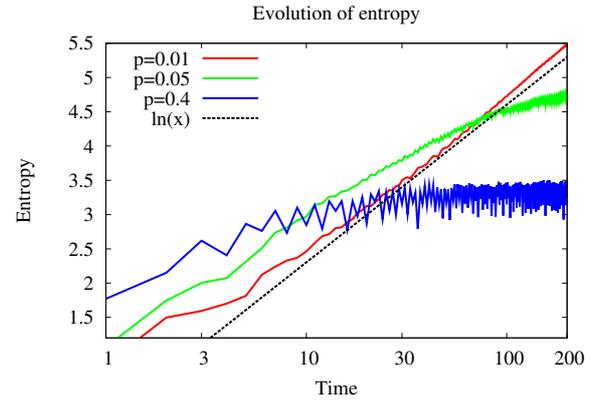}} 
\par\end{centering}

\caption{Evolution of classical (Shannon) entropy of probability distribution
of positions of walker for $T=200$, $R=20000$ and $j=11$. Classical entropy is a rising function of time $t$ for $p=0.01$
where the functional form is proportional to $\ln t$. On the other
hand, increasing probability $p$ causes {}``freezing'' of the evolution, as 
can be observed for $p=0.4$.}

\label{Evolution_of_entropy} 
\end{figure}

The conclusion of the above paragraphs on odd-sized jumps in the QW is
that we clearly observe localization of quantum walker in
static disorder media described by the probability space of random unitary
operators $S(\Omega,{\cal F},P)$ parametrized by probability $p$.
Moreover, the numerical results suggest that taking separately odd and even 
timesteps the probability distribution of the position of the walker converges 
to a stationary distribution: \begin{equation}
\langle{n+\frac{1}{2};2t}\vert\widehat{\rho}(2t)\vert{n+\frac{1}{2};2t}\rangle \xrightarrow{t\rightarrow+\infty} \mathbb{P}_{p,j}^{even}\left(n+\frac{1}{2}\right)\end{equation}
 \begin{eqnarray}
\langle{n+\frac{1}{2};2t+1}\vert\widehat{\rho}(2t+1)\vert{n+\frac{1}{2};2t+1}\rangle\xrightarrow{t\rightarrow+\infty} \nonumber 
\\
\mathbb{P}_{p,j}^{odd}\left(n+\frac{1}{2}\right).\end{eqnarray}
  $\mathbb{P}_{p,j}^{even}(n$ 
${}+\frac{1}{2})$
and $\mathbb{P}_{p,j}^{odd}(n+\frac{1}{2})$ are universal distributions
for odd and even timesteps for probabilities $p$ in range close to
$\frac{1}{2}$ and the size of the region of convergence is also dependent
on $j$.

Let us consider the fundamental properties
of the asymptotic probability distribution 
$\mathbb{P}_{p,j}^{even}\left(n+\frac{1}{2}\right)$; variables of the 
probability distribution $\mathbb{P}_{p,j}^{odd}\left(n+\frac{1}{2}\right)$
for odd timesteps are different but general properties are shared.
We concluded from Fig.~\ref{Quantum_random_walk_odd_jumps} that we
observe a formation of an overall Laplace distribution modulated by Laplace
distribution of peaks, both in the form \begin{equation}
\mathbb{P}\left(x\right)=C\exp\left(-\frac{\vert 
x-\mu\vert}{a}\right),\label{Laplace_distribution}\end{equation}
 where $\mu$ is the mean value of the distribution and $a$ is related to 
variance via
$Var\ \mathbb{P}\left(x\right)= 2\cdot a^2$. Our aim is to estimate the inverse parameter $\frac{1}{a}$ of Laplace
distribution in two cases for 
\begin{itemize}
\item the whole distribution (with mean at the point of injection), 
\item the modulated peaks (with mean at the center of the peak). 
\end{itemize}

We focus on the shape of the whole probability distribution. The plots
in Fig.~\ref{Quantum_random_walk_odd_fits_whole} suggest a U-shape
function of the fitted inverse parameter $1/a$ of the Laplace 
distribution for all parameter values. We vary the probability
$p$ and we connect points for the same jump sizes $j$. Defining the $x$-axis 
as $x=p\cdot j$ we put all minima at a fixed position. Thus,
for the minima  $p \cdot j=x_{min}$  holds
where an approximation of the constant 
is $x_{min}=2$. The values of minima of the fitted inverse parameter 
$1/a$, reached for $p=x_{min}/j$, form an increasing function 
of jump size $j$.
\begin{figure}
\begin{centering}
{\includegraphics[scale=0.66]{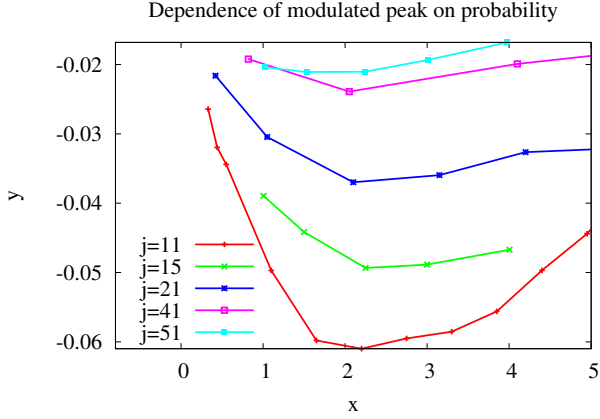}}

\par\end{centering}

\caption{Fit of the inverse parameter $\frac{1}{a}$ of the probability distribution
of position of the walker for $T=200$ and $R=15000$ with various odd sizes
of jump $j$ taking $x$-axis as $x=p\cdot j$ and $y=1/a$. We observe
the formation of a U-shape function for constant sizes of jump $j$ and 
changing $x=p\cdot j$ with a minimum at $x_{min}=2$.
}
\label{Quantum_random_walk_odd_fits_whole} 
\end{figure}
The size of the central peak is $j$, $j/2$ on the left and
on the right from the maximum at $0$. Estimation of the inverse parameter
$1/a$ of the central peak is shown in Fig.~\ref{Quantum_random_walk_odd_fits}.
The shape of the central peak is an increasing linear function of probability
of jump $p$ independent of the size of jump $j$. Thus the central
peak becomes steeper and steeper with increasing $p$---localization
of the walker becomes more evident. This is  true not only for
the central peak but it holds true for the other peaks as well, see Fig.~\ref{Quantum_random_walk_odd_jumps}.

\begin{figure}

\begin{centering}
{\includegraphics[scale=0.66]{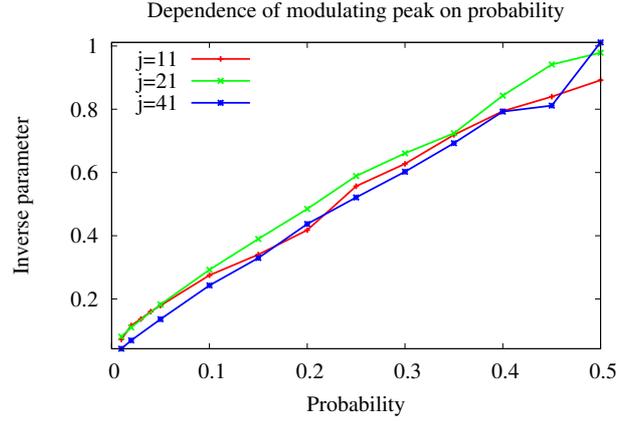}}

\par\end{centering}

\caption{Fit of the inverse parameter $\frac{1}{a}$ of the central peak of
the probability distribution of the position of the walker for $T=200$
and $R=15000$ with various odd sizes of jump $j$.
Increasing probability causes linear increase of the inverse parameter
$\frac{1}{a}$ with $p$ when $j$ is fixed.
Moreover we observe a universal dependence on $p$, independent of the length
of jump $j$.
}

\label{Quantum_random_walk_odd_fits} 
\end{figure}

Let us look at the dependence of entropy of probability distribution $\mathbb{P}_{j,p}\left(n+\frac{1}{2}\right)$
to find a walker at position $n+\frac{1}{2}$. We plot it in 
Fig.~\ref{Quantum_random_walk_odd_entropy} where the entropy was measured by 
both extensive and non-extensive measures. In the first case, the extensive 
measure is classical (Shannon) entropy. The non-extensive one is the 
$q$-entropy introduced by Tsallis, which for a particular $q=1$ reduces to the classical entropy (for more about $q$-statistics, see 
\cite{Tsallis-1988,Tsallis-2004,Tsallis-2009a,Tsallis-2009b,Tsallis-2009c}).  
Both entropies
are decreasing functions for increasing disorder measured by $p$.
This leads to a counter-intuitive statement
 that increasing classical
disorder organizes quantum system even if measured by a non-extensive entropy (in classical random walk longer jump causes increasing
variance and increasing entropy as well).
Moreover, the $q$-entropy of the probability distribution of position
of the walker for the parameter value $q=2$ brings all curves corresponding to 
different sizes of jump $j$ on one single curve. This means that if we take into 
account non-additivity of the system quantified by the parameter 
$\vert1-q\vert$ (taken form the theory of the nonadditive $q$-entropy), we can map the QWs between each other for different sizes of 
error $j$ holding probability of occurrence of pair of error positions $p$ 
constant for $q=2$.

\begin{figure}

\begin{centering}
{\includegraphics[scale=0.66]{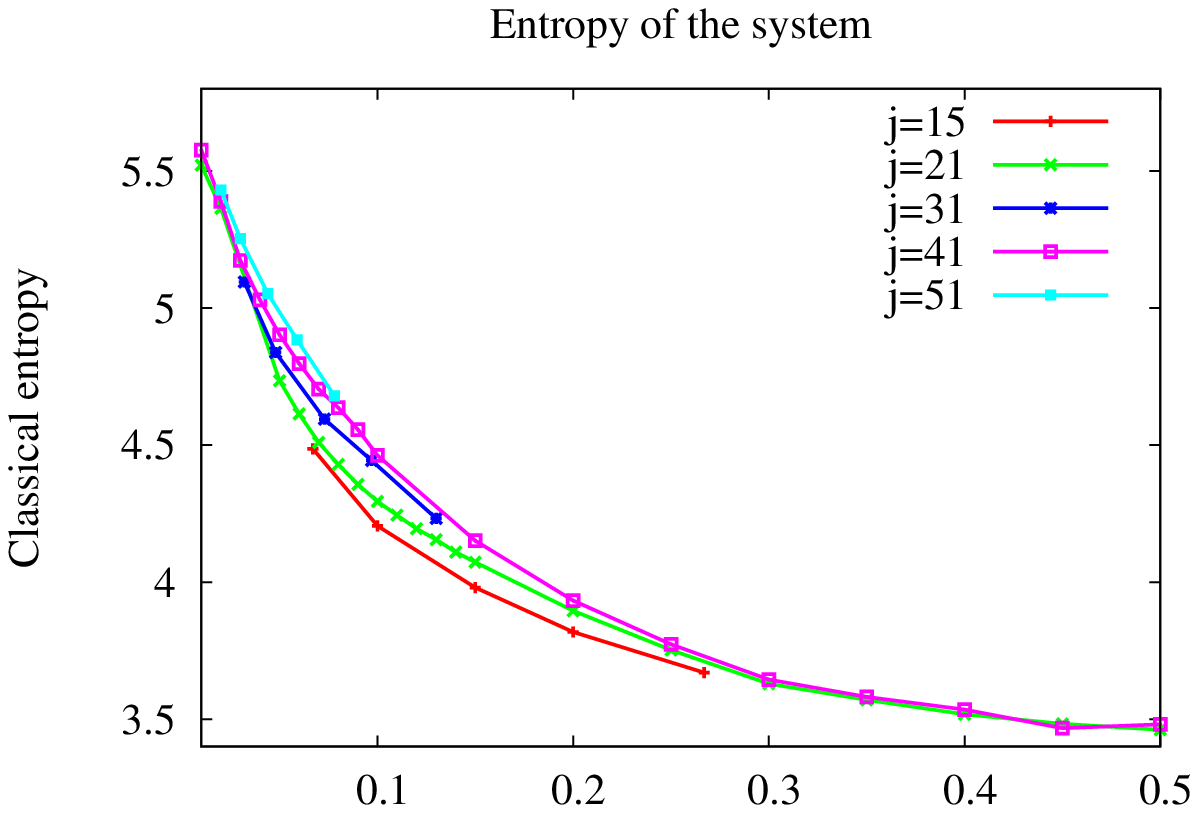}} {\includegraphics[scale=0.66]{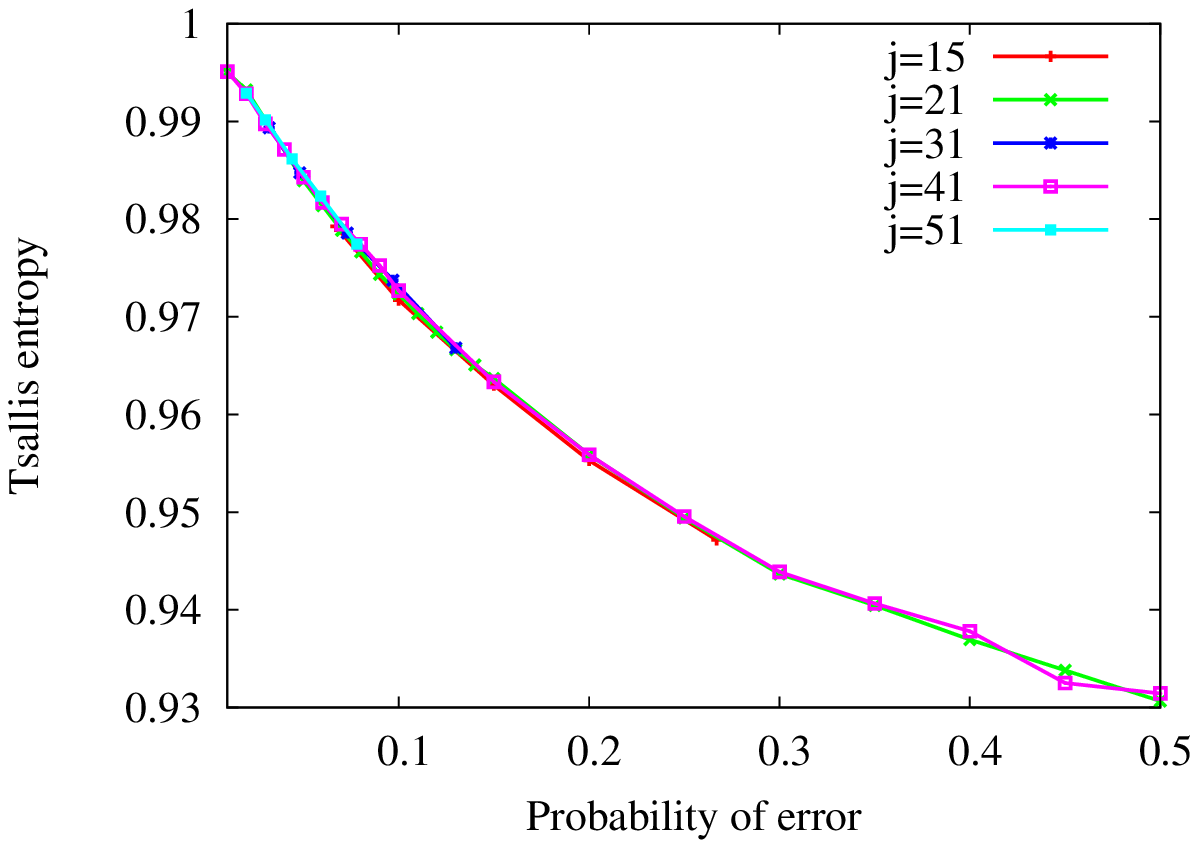}} 
\par\end{centering}

\caption{Classical entropy of the probability distribution is on the top, Tsallis
entropy for $q=2$ on the bottom for parameters
set to $T=200$ and $R=15000$. Classical entropy, on the top, decreases
for increasing $p$ but the shape of the dependence is different for different values of
$j$. The shape of the decrease becomes the same for Tsallis entropy with
parameter $q=2$ as seen on the bottom.
}

\label{Quantum_random_walk_odd_entropy} 
\end{figure}

Finally, the last studied variable for the QW with static disorder
was the variance of position displacements $Var\ \mathbb{P}_{p,j}$,
plotted in Fig.~\ref{Qunatum_random_walk_odd_statistics}
on the left and with rescaled axes on right-hand side. The variance forms 
a U-shaped function of $p$ with moving minima for different but fixed $j$.
Rescaling the axis $x=p\cdot j^{\alpha}$ where $\alpha=1.04$ and axis
$y=j^{-\beta}\cdot Var\ \mathbb{P}_{p,j}$ where $\beta=1.67$,
we clearly see from Fig.~\ref{Qunatum_random_walk_odd_statistics}, right 
inset, that there is a universal U-shape function $f_{\mathbf{odd}}^{*}$(x)
fulfilling
\begin{equation}
f_{\mathbf{odd}}^{*}\left(x\right)=j^{-\beta}\cdot Var\ 
\mathbb{P}_{x,j}.
\end{equation}
 Due to relation $\alpha\simeq1$ and the U-shape of function $f_{\mathbf{odd}}^{*}\left(x\right)$
we can conclude that the overall variance of probability distribution
of the walker's position is strongly correlated with the fit of the
inverse parameter $1/a$ (of the Laplace distribution) of whole
probability distribution.

\begin{figure}
\begin{centering}
\includegraphics[scale=0.67]{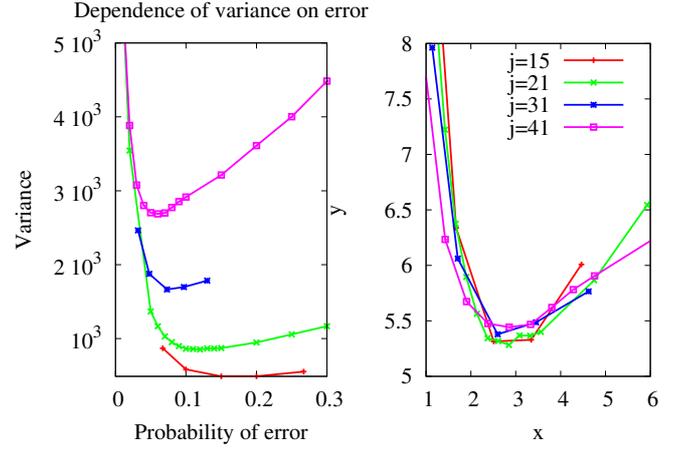}

\par\end{centering}

\caption{Dependence of variance of the probability distribution for odd
$j$ on $p$
for $T=200$ and $R=15000$ is shown. The typical U-shape function
can be observed with shifted position of the minimum. On the right, there
are the same data with rescaled $x$-axis where $x=p\cdot j^{\alpha}$
where $\alpha=1.04$ and $y$-axis where $y=j^{-\beta}Var(i)$ where
$\beta=1.67$. The data fit on the same U-shape function reaching their minimum at the same place.}

\label{Qunatum_random_walk_odd_statistics} 
\end{figure}

\subsubsection{Even jumps}

Let us now assume that the ensemble of the random unitary matrices is still 
parametrized by the probability $p$ that the pair of erroneous positions 
occur, but the distance of a pair of errors $j$ is an even number and due to the
mappings \ref{Mapping_of_states_1} and \ref{Mapping_of_states_2} the chirality 
of walker does not change. 
The typical formation of Laplace-like distribution as in the case
of odd jumps is not present, but instead we observe a $3$-peaked structure, as
seen in Fig.~\ref{Quantum_random_walk_even_jumps} (left-right asymmetric
due to the initial condition) modulated by a periodic function that has its
period equal to the length of jump $j$. To emphasize the difference between
odd and even $j$, we conclude that both cases form a located peak
at the initial position of the walker. However, odd sized jumps $j$ cause
Laplace-like overall distribution while the case of even $j$ can form
a $3$-peaked distribution where the central (localized) peak is
sharp but the two others are broad and peaks with distance $j$ are modulated
on the overall structure.

\begin{figure}

\begin{centering}
{\includegraphics[scale=0.66]{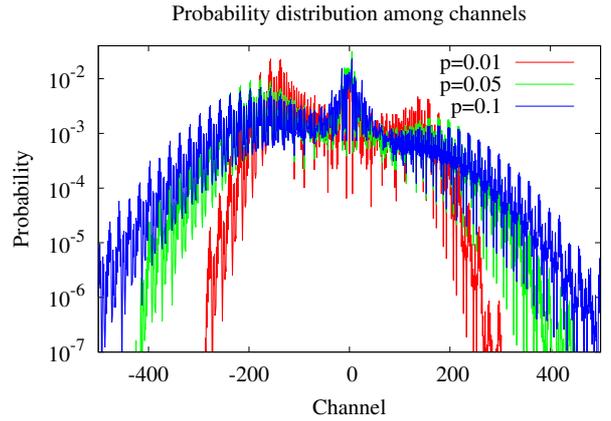}} 
\par\end{centering}

\caption{Probability distribution among channels for $T=200$ and $R=15000$
in the case of even sized jumps ($j=20$). The probability distribution
forms a $3$-peak structure that is modulated by additional periodical small peaks
which are separated by $j$.}

\label{Quantum_random_walk_even_jumps} 
\end{figure}

\subsection{Dynamic disorder}

The static disorder analyzed in the preceding sections assumes random, but 
time-correlated emergence of pairs of errors that are expressed in the set of 
random unitary operators $S(\Omega,{\cal F},\mathbb{P})$. In contrast, 
dynamic disorder assumes independent and identically distributed random
unitary operators from the same probability space $S(\Omega,{\cal F},\mathbb{P})$.

Let us consider on the probability distribution of the position of the walker, plotted in
Fig.~\ref{Quantum_random_walk_bulky_pattern} on the top. We observe
decoherence of the quantum walker forming  a distribution reminiscent of
a Gaussian distribution modulated by $1$D QW patterns for small probabilities
of jump $p$.  On the other hand, large $p$ causes a modulation by valleys
with distance $j$ between peaks. The functional dependence of the standard
deviation of the probability distribution of position of the walker on
probability of jump $p$ in Fig.~\ref{Quantum_random_walk_bulky_pattern}
on the bottom shows a linear behavior.

\begin{figure}
\begin{centering}
{\includegraphics[scale=0.66]{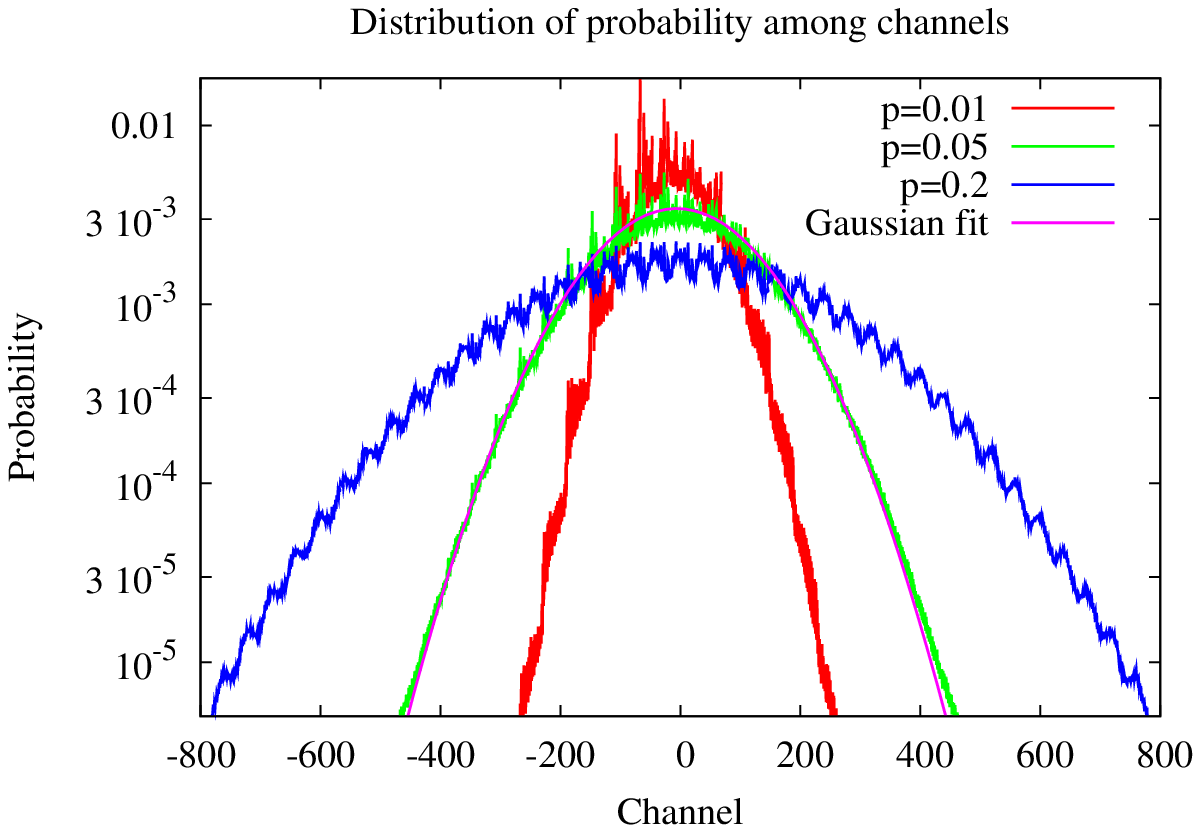}}
{\includegraphics[scale=0.66]{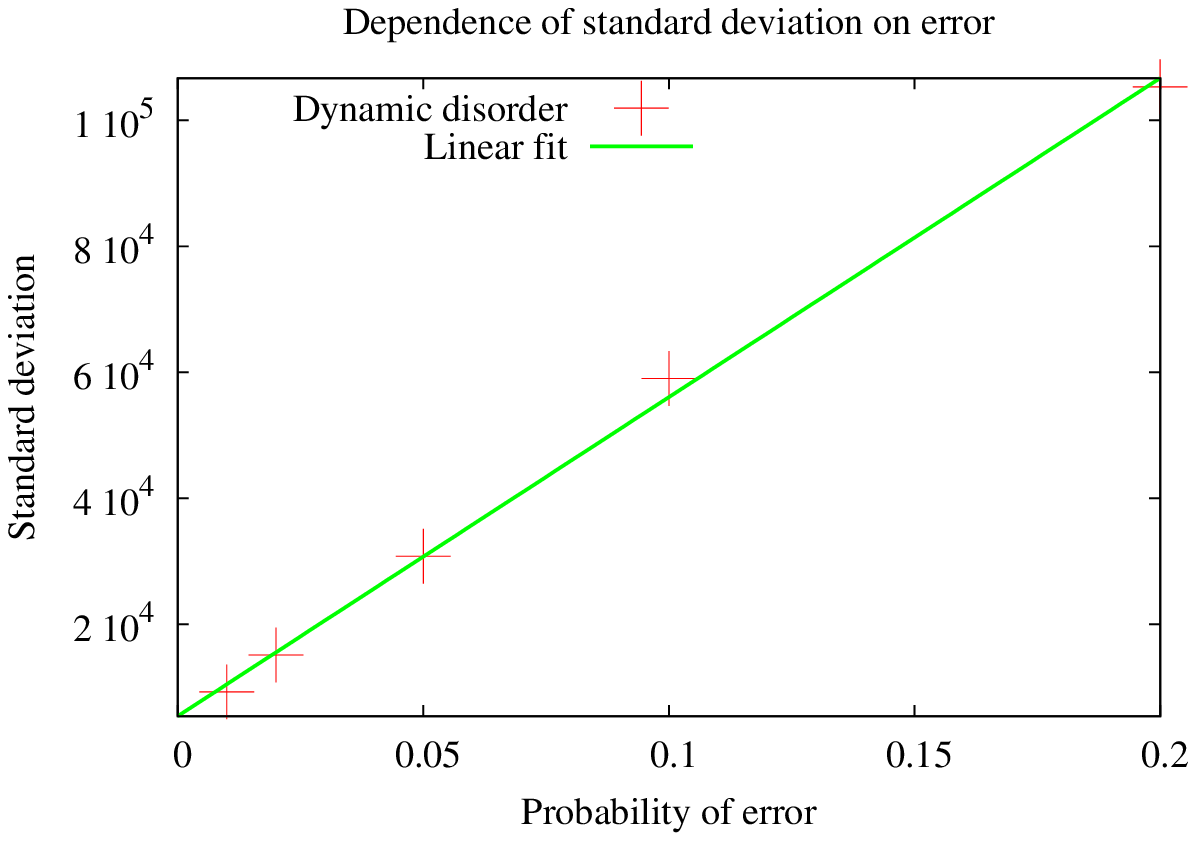}} 
\par\end{centering}

\caption{On top, probability distribution among the channels for dynamic disorder
for $T=100$, $R=5000$ and $j=40$. Probability distribution can
be fitted to a normal distribution where residual patterns of the QW
are present for $p=0.01$; for $p=0.2$, formation of periodical
valleys modulated on overall distribution (similar to static disorder
with an odd size of jump) can be observed. On the bottom, functional 
dependence of standard deviation of the probability distribution of the 
position of the walker on probability of error $p$ and its linear fit.}

\label{Quantum_random_walk_bulky_pattern} 
\end{figure}

\begin{comment}
there is rescaled $x$-axis by $x'=p^{\gamma}x$ and in the same manner
$y'=p^{-\delta}y$ where $\gamma=\delta=0.42$ what lead data to one
single curve. For case $p=0.01$, we can still see residual structure
of Quantum Walk.
\end{comment}
{}

\section{Discussion and Conclusions}

We have defined a modification of the QW in $1$D where the environment causes long 
but fixed-size jumps that emerge with a constant probability---the model is no 
longer deterministic but stochastic, depending on a random variable.
In one step, first a unitary QW coin and shift operator act on the state, then a stochastic displacement operator generates jumps.
We have studied two classes of QW with disordered 
connections between beam-splitters. First, dynamic disorder model shows 
decoherence that leads to Gaussian distribution modulated by residual patterns 
of QW (for small probabilities of jump $p$) or by valleys (for large 
probabilities of jump $p$). The standard deviation of the position of the 
walker is a linear function of $p$. In the second case, the QW is perturbed by 
static disorder. The behaviour of the model depends on the size of jump and we have 
investigated odd and even jump sizes separately.
Even jump sizes cause localization of the walker at the initial position and 
two other broad peaks modulated by oscillations with a period of the size of 
jump $j$. The focus of the paper, however,  was mainly given to jumps with odd 
size.  In this case, the evolution of the variance shows a transition from the 
ballistic diffusion to a no-diffusion regime with increasing $p$ and this 
observation is supported by numerical calculation of classical entropy which 
changes from logarithmically increasing regime to a no-growth regime.

In addition, the probability distribution of positions of the walker changes from 
a pattern typical to the QW to Laplace distribution modulated by Laplace 
distributed peaks separated by the size of jump $j$. The formation of Laplace 
distribution  depends on the probability $p$ that a pair of errors occurs. 
Moreover, the model exhibits the unusual property that classical disorder in 
quantum system can decrease, i.e., we observe a decrease of both the classical 
(Shannon) and $q$-entropy (with $q=2$) of measurements.  Finally, our numerical calculation shows that using non-additive 
$q$-entropy with $q=2$ there is an universal functional dependence of 
$q$-entropy on $p$ for arbitrary $j$. 
The next part of out investigation was turned to the variance of the 
probability distribution. Our numerical results indicate that there is an 
universal functional form of variance of the probability distribution of the 
position of the walker---variance multiplied by $j^{-\beta}$ is an universal 
U-shaped function of one variable $p\cdot j^{\alpha}$ where $\alpha=1.04$ and 
$\beta=1.67$. The functional dependence of the universal function shows a minimum 
which separates two regimes---one with decreasing variance and the second with 
increasing variance. To put this result in the broader context of Complex Systems 
and Game Theory, we note that similar behavior of variance of attendance has been observed 
in Minority Game, exhibiting dynamical phase transition, see, e.g., 
\cite{Lavicka-2010,challet-1998}.

\section{Acknowledgements}

The financial support by MSM 6840770039, M\v SMT LC 06002, GA\v CR 202/08/H072, the Czech-Hungarian cooperation project (KONTAKT CZ-11/2009), Hungarian Scientific Research Fund (OTKA) under Contract No. K83858, the Emmy Noether Program of the DFG (contract No LU1382/1-1) and the cluster of excellence Nanosystems Initiative Munich (NIM) is gratefully 
acknowledged.

\appendix

\section{Properties of the set of operators}

Let the unperturbed walking space be represented by a cycle graph with $N$ 
vertices where every vertex represents a channel in the main text.  The errors in the network are represented by swapping the walker's 
probability amplitudes between vertices labelled $i$ and $i+j\bmod N$ for 
a fixed $j$, which happens with a relative probability $p$. With a relative 
probability $1-p$, a vertex is left intact. Finally, a vertex already 
exchanged with another one cannot be used for another transposition in the 
same permutation.

These conditions give the probability of a permutation $\pi$ in the form
\begin{equation}
\mathbb{P} (\pi) = \frac{1}{Z^{(N)}} p^{tr(\pi)} (1-p)^{N - 2 \cdot tr(\pi)},
\end{equation}
where $tr(\pi)$ denotes the unique number of independent transpositions 
forming the permutation $\pi$. A factor of $2$ in the exponent of $1-p$ is 
present due to the fact that every transposition reduces the number of unused 
vertices by $2$. Finally, the factor $Z^{(N)}$ is the normalization constant.

The constant $Z^{\left( N \right) }$ is computed as
\begin{equation}
Z^{\left( N \right) }= \sum_{k=0}^{\lfloor \frac{N}{2} \rfloor} N_k p^k \left( 
1- p \right)^{N-2\cdot k },
\end{equation}
where $N_k$ is the count of all possible permutations formed by exactly $k$ 
non-incident transpositions of size $j$.

\begin{lemma}
\label{Lemma 1}
Let $N>2$, let $j$ and $N$ are relatively prime. Then 
\begin{equation}\label{Equation for Z}
Z^{\left( N \right) }=1+\left( -p \right)^N .
\end{equation}
\end{lemma}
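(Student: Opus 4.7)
The plan is to interpret $Z^{(N)}$ as the partition function of a hard-dimer model on a graph determined by the jump structure, and then to evaluate it via a short linear recursion of transfer-matrix type. The coprimality hypothesis enters only once, at the very start.

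First I would observe that when $\gcd(N,j)=1$, iterating $i\mapsto i+j\pmod N$ visits every residue in $\mathbb{Z}_N$, so the admissible transpositions $(i,\,i+j\bmod N)$ are exactly the edges of a single Hamiltonian cycle $C_N$ on the $N$ channels. A set of non-incident jump-transpositions is then precisely a matching on $C_N$, and
\[
Z^{(N)} \;=\; \sum_{M\text{ matching of }C_N}\; p^{|M|}(1-p)^{N-2|M|}.
\]
As an auxiliary quantity I would introduce the analogous sum $P^{(n)}$ for a path on $n$ vertices. Conditioning on whether the last vertex is unmatched or matched to its sole neighbour gives the recursion $P^{(n)}=(1-p)P^{(n-1)}+p\,P^{(n-2)}$ with $P^{(0)}=1$ and $P^{(1)}=1-p$. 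The characteristic polynomial $x^{2}-(1-p)x-p=(x-1)(x+p)$ factors cleanly, producing the closed form
\[
P^{(n)} \;=\; \frac{1-(-p)^{\,n+1}}{1+p}.
\]

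To pass from the path back to the cycle, I would fix one vertex $v$ of $C_N$ and partition the matchings according to whether $v$ is unmatched, matched with its left neighbour, or matched with its right neighbour; the hypothesis $N>2$ ensures the two neighbours are distinct, so no case is double-counted. Each case leaves a weighted matching on a path, giving
\[
Z^{(N)} \;=\; (1-p)\,P^{(N-1)} \;+\; 2p\,P^{(N-2)}.
\]
Substituting the closed form of $P^{(n)}$, the terms proportional to $(-p)^{N}$ collect into $(1+p)(-p)^{N}$ in the numerator while the constants collect into $1+p$, so the common factor cancels and the claim $Z^{(N)}=1+(-p)^{N}$ drops out.

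The only real subtlety is the first step: one must check that coprimality really yields a single cycle rather than a disjoint union of smaller cycles. This is precisely where $\gcd(N,j)=1$ is essential; otherwise $C_N$ would split into $g=\gcd(N,j)$ independent cycles of length $N/g$, and by independence the partition function would become the $g$-th power of the single-cycle expression, matching the normalisation $Z=\bigl(1+(-p)^{N/g}\bigr)^{g}$ used earlier in the text. Once the graph is identified, the remainder is a routine linear-recursion computation.
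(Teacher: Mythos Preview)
Your argument is correct and complete. The interpretation of $Z^{(N)}$ as a hard-dimer (matching) partition function on the cycle $C_N$ is exactly the right picture, the path recursion and its closed form are correct, and your cycle-to-path reduction is sound; the hypothesis $N>2$ is used precisely where you say it is.

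The route, however, differs from the paper's. After the same coprimality step, the paper counts the number $N_k$ of $k$-matchings on $C_N$ explicitly, obtaining $N_k=\binom{N-k}{k}+\binom{N-k-1}{k-1}$ by a bijective argument, and then evaluates $\sum_k N_k\,p^k(1-p)^{N-2k}$ through the generating function $F(x)=\sum_N Z^{(N)}x^N$, which it sums to a rational function and expands by partial fractions. Your transfer-matrix approach bypasses both the explicit enumeration of $N_k$ and the generating-function summation: the characteristic polynomial factors over the rationals, so the closed form for $P^{(n)}$ and hence for $Z^{(N)}$ falls out in a few lines. What the paper's method buys is the explicit combinatorial formula for $N_k$ as a byproduct; what yours buys is brevity and a transparent explanation of why the answer is so simple (the roots $1$ and $-p$ of $x^2-(1-p)x-p$ are visibly responsible for the two terms of $1+(-p)^N$). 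Your final remark that the general $\gcd(N,j)=g$ case follows by independence over the $g$ cycles is also exactly the content of the paper's Lemma~2.
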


\begin{proof}
Due to the relative primality of $j$ and $N$, we can relabel the vertices by 
indices $0$ through $N$ such that vertices with successive indices have 
a distance of $j$ in the original numbering. This way, we can reduce the 
problem of finding $N_k$ to a combinatorial problem of finding the number of 
$k$-element subsets $A \subseteq \lbrace 0, 1, \ldots, N-1\rbrace$ satisfying 
the following conditions:
\begin{enumerate}[(a)]
\item for all $0 \le i < N-1$, $\lbrace i, i+1 \rbrace \nsubseteq A$,
\item $\left\lbrace 0, N-1 \right\rbrace \nsubseteq A$.
\end{enumerate}

In order to find $N_k$, we discuss two disjoint cases:
\begin{enumerate}[(i)]
\item Let us count the subsets which do not contain $N-1$ as an element. For 
each such set $A = \lbrace a_1, a_2, a_3, \ldots,$ $a_k\rbrace$, $a_1 < a_2 
< \ldots < a_k$, we denote $\tilde A = \lbrace a_1, a_2-1, a_3-2, \ldots, 
a_k-k+1 \rbrace$. This is a one-to-one mapping, reducing the problem to 
finding $k$-element subsets of $N-k$ elements without any additional 
restriction. This gives $\binom{N-k}{k}$ possible subsets.
\item Let now $N-1 \in A$. Then the other $k-1$ elements of $A$ must lie in 
$\lbrace 1, 2, \ldots, N-2 \rbrace$, with no two of them successive and $N-2$ 
excluded. This is a variant of the subproblem (i), giving $\binom{N-k-1}{k-1}$ 
possibilities for $A$.
\end{enumerate}
Adding these results, we find that
\begin{equation} \label{N_k}
N_{k} = \binom{N-k}{k} + \binom{N-k-1}{k-1}.
\end{equation}

In order to calculate $Z^{(N)}$, we find the generating function
\begin{eqnarray}
F(x) &=& \displaystyle\sum_{N=0}^{+\infty} Z^{(N)} x^N 
 \nonumber \\
&=&
 \displaystyle\sum_{N=2}^{+\infty} \sum_{k=0}^{\lfloor \frac{N}{2} \rfloor}
N_k
 \times p^k (1-p)^{N-2\cdot k} x^N.
\end{eqnarray}
Here, for simplicity, we generalize \ref{N_k} also for $N \le 2$ and we define 
by convention $\binom{-1}{-1} = 0$. Using standard methods, we obtain for the 
sum
\begin{equation}
F(x) = \frac{1+px^2}{1-x+xp-px^2}.
\end{equation}
This result can be decomposed into partial fractions as
\begin{equation}
F(x) = \frac{1}{1-x} +\frac{1}{1+px} -1,
\end{equation}
from which the power series can be derived quickly as
\begin{eqnarray}
F(x) = \sum_{N=0}^{+\infty} x^N + \sum_{N=0}^{+\infty} \left(-px\right) ^N - 1 \nonumber \\
= -1 + \sum_{N=0}^{+\infty} \underbrace{ \left( 1 + \left( -p  \right)^N \right) }_{Z^{\left(N \right) }}  x^N.
\end{eqnarray}
QED.
\end{proof}

If $N=2$, the formula \ref{Equation for Z} cannot be used.  Indeed, computing 
$Z^{(2)}$ manually gives
\begin{equation}
Z^{(2)} = p + (1-p)^2 = 1 - p + p^2.
\end{equation}
This is because \ref{N_k} gives an incorrect result for $k=1$ in this case.  
In practical situations, however, $N \gg 2$.

\begin{lemma}
Let $N$ and $j$ be positive integers such that $j < N$, $g 
= \mathop{\mathrm{gcd}}(N,j)$ and $\frac{N}{g}>2$. Then
\begin{equation}\label{Equation for Z2}
Z^{\left( N \right) } = \left(1+\left( -p \right)^{\frac{N}{g}} \right)^g.
\end{equation}
\end{lemma}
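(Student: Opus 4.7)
The plan is to exploit the combinatorial fact that the ``jump graph'' on $\lbrace 0, 1, \ldots, N-1 \rbrace$ with edges $\lbrace i, i+j \bmod N \rbrace$ decomposes into $g$ disjoint cycles when $g = \gcd(N,j) > 1$. Each of these sub-cycles has length $N/g$, and consists of a residue class modulo $g$, namely $\lbrace r, r+j, r+2j, \ldots, r+(N/g-1)j\rbrace \bmod N$ for $r = 0, 1, \ldots, g-1$. The key observation is that every allowed transposition of size $j$ connects two vertices in the \emph{same} residue class, so a valid permutation $\pi$ (a product of non-incident size-$j$ transpositions) is determined by an independent choice of such a permutation on each of the $g$ sub-cycles.

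First I would make the decomposition rigorous: define the $g$ residue classes, note that $i$ and $i+j \bmod N$ always lie in the same class, and verify that each class, together with the induced size-$j$ edges, forms a cycle of length $N/g$ (using that $j/g$ is coprime to $N/g$, so stepping by $j$ eventually visits every element of the class before returning). Next I would relabel each sub-cycle by sending the $\ell$-th vertex encountered to $\ell \in \lbrace 0, 1, \ldots, N/g - 1 \rbrace$; under this relabeling the size-$j$ edges become size-$1$ edges on a cycle of length $N/g$, and the non-incidence condition for the original edges matches the non-incidence condition for the relabeled size-$1$ edges.

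Because the $g$ sub-problems involve disjoint sets of vertices and disjoint sets of admissible transpositions, the partition function factorizes:
\begin{equation}
Z^{(N)} = \prod_{r=0}^{g-1} Z^{(N/g)}_r,
\end{equation}
where each $Z^{(N/g)}_r$ is precisely the normalization constant for the coprime problem on a cycle of length $N/g$ with jump $1$. Since $N/g > 2$ by assumption, Lemma~\ref{Lemma 1} applies and yields $Z^{(N/g)}_r = 1 + (-p)^{N/g}$. Multiplying the $g$ identical factors gives the desired identity $Z^{(N)} = \left(1 + (-p)^{N/g}\right)^g$.

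The main obstacle is the careful justification that the weighting by $p^{tr(\pi)}(1-p)^{N-2\cdot tr(\pi)}$ indeed factorizes under the decomposition. This requires noting that $tr(\pi) = \sum_r tr(\pi_r)$ where $\pi_r$ is the restriction to sub-cycle $r$, and that the exponent of $1-p$ also splits as $N - 2tr(\pi) = \sum_r (N/g - 2tr(\pi_r))$. Once this bookkeeping is in place, the product structure is immediate and the conclusion follows from Lemma~\ref{Lemma 1} applied $g$ times.
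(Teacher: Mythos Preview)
Your proposal is correct and follows essentially the same route as the paper: split $\{0,\ldots,N-1\}$ into the $g$ residue classes modulo $g$, observe that every admissible transposition stays within a class, hence the partition function factorizes, and apply Lemma~\ref{Lemma 1} to each factor. Your version is in fact a bit more careful than the paper's---you make explicit the relabeling under which the induced jump on each sub-cycle becomes $1$ (equivalently $j/g$, coprime to $N/g$), and you spell out why the weight $p^{tr(\pi)}(1-p)^{N-2\,tr(\pi)}$ splits multiplicatively---whereas the paper simply asserts the independence and invokes Lemma~\ref{Lemma 1}.
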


\begin{proof}
If $g = 1$, then $j$ and $N$ are relatively prime and we can use 
Lemma~\ref{Lemma 1} to obtain the result. In the case ${g} > 1$, we 
first split the set $\lbrace 0, 1, 2, \ldots, N-1 \rbrace$ into $g$ modular 
classes $\pmod g$.  According to the definition, the selection of the errors 
can be done independently on each of these subsets. Therefore, as 
$\frac{N}{g}$ is relatively prime to $j$, we can use Lemma~\ref{Lemma 1} for 
each of these classes and multiply the partial results to obtain the total 
partition function in the form stated by the Lemma.
\end{proof}

\bibliographystyle{unsrt}
\bibliography{bibliography}{}
\end{document}